\def\BibTeX{{\mathrm B\kern-.05em{\sci\kern-.025em b}\kern-.08em
    T\kern-.1667em\lower.7ex\hbox{E}\kern-.125emX}}
\begin{document}
\newtheorem{theorem}{\emph{\underline{Theorem}}}
\newtheorem{acknowledgement}[theorem]{Acknowledgement}
\renewcommand{\algorithmicensure}{ \textbf{repeat:}}
\newtheorem{axiom}[theorem]{Axiom}
\newtheorem{case}[theorem]{Case}
\newtheorem{claim}[theorem]{Claim}
\newtheorem{conclusion}[theorem]{Conclusion}
\newtheorem{condition}[theorem]{Condition}
\newtheorem{conjecture}[theorem]{Conjecture}
\newtheorem{criterion}[theorem]{Criterion}
\newtheorem{definition}{Definition}
\newtheorem{exercise}[theorem]{Exercise}
\newtheorem{lemma}{\emph{\underline{Lemma}}}
\newtheorem{corollary}{\emph{\underline{Corollary}}}
\newtheorem{notation}[theorem]{Notation}
\newtheorem{problem}[theorem]{Problem}
\newtheorem{proposition}{\emph{\underline{Proposition}}}
\newtheorem{solution}[theorem]{Solution}
\newtheorem{summary}[theorem]{Summary}
\newtheorem{assumption}{Assumption}
\newtheorem{example}{\bf Example}
\newtheorem{remark}{\bf \emph{\underline{Remark}}}

\def\qed{$\Box$}
\def\QED{\mbox{\phantom{m}}\nolinebreak\hfill$\,\Box$}
\def\proof{\noindent{\emph{Proof:} }}
\def\poof{\noindent{\emph{Sketch of Proof:} }}
\def
\endproof{\hspace*{\fill}~\qed
\par
\endtrivlist\unskip}
\def\endproof{\hspace*{\fill}~\qed\par\endtrivlist\vskip3pt}

\def\E{\mathsf{E}}
\def\eps{\varepsilon}
\def\phi{\varphi}
\def\Lsp{{\boldsymbol L}}
\def\Bsp{{\boldsymbol B}}
\def\lsp{{\boldsymbol\ell}}
\def\Ltsp{{\Lsp^2}}
\def\Lpsp{{\Lsp^p}}
\def\Linsp{{\Lsp^{\infty}}}
\def\LtR{{\Lsp^2(\Rst)}}
\def\ltZ{{\lsp^2(\Zst)}}
\def\ltsp{{\lsp^2}}
\def\ltZt{{\lsp^2(\Zst^{2})}}
\def\ninN{{n{\in}\Nst}}
\def\oh{{\frac{1}{2}}}
\def\grass{{\cal G}}
\def\ord{{\cal O}}
\def\dist{{d_G}}
\def\conj#1{{\overline#1}}
\def\ntoinf{{n \rightarrow \infty }}
\def\toinf{{\rightarrow \infty }}
\def\tozero{{\rightarrow 0 }}
\def\trace{{\operatorname{trace}}}
\def\ord{{\cal O}}
\def\UU{{\cal U}}
\def\rank{{\operatorname{rank}}}
\def\acos{{\operatorname{acos}}}

\def\SINR{\mathsf{SINR}}
\def\SNR{\mathsf{SNR}}
\def\SIR{\mathsf{SIR}}
\def\tSIR{\widetilde{\mathsf{SIR}}}
\def\Ei{\mathsf{Ei}}
\def\l{\left}
\def\r{\right}
\def\({\left(}
\def\){\right)}
\def\lb{\left\{}
\def\rb{\right\}}

\setcounter{page}{1}

\newcommand{\eref}[1]{(\ref{#1})}
\newcommand{\fig}[1]{Fig.\ \ref{#1}}

\def\bydef{:=}
\def\ba{{\mathbf{a}}}
\def\bb{{\mathbf{b}}}
\def\bc{{\mathbf{c}}}
\def\bd{{\mathbf{d}}}
\def\bee{{\mathbf{e}}}
\def\bff{{\mathbf{f}}}
\def\bg{{\mathbf{g}}}
\def\bh{{\mathbf{h}}}
\def\bi{{\mathbf{i}}}
\def\bj{{\mathbf{j}}}
\def\bk{{\mathbf{k}}}
\def\bl{{\mathbf{l}}}
\def\bm{{\mathbf{m}}}
\def\bn{{\mathbf{n}}}
\def\bo{{\mathbf{o}}}
\def\bp{{\mathbf{p}}}
\def\bq{{\mathbf{q}}}
\def\br{{\mathbf{r}}}
\def\bs{{\mathbf{s}}}
\def\bt{{\mathbf{t}}}
\def\bu{{\mathbf{u}}}
\def\bv{{\mathbf{v}}}
\def\bw{{\mathbf{w}}}
\def\bx{{\mathbf{x}}}
\def\by{{\mathbf{y}}}
\def\bz{{\mathbf{z}}}
\def\b0{{\mathbf{0}}}

\def\bA{{\mathbf{A}}}
\def\bB{{\mathbf{B}}}
\def\bC{{\mathbf{C}}}
\def\bD{{\mathbf{D}}}
\def\bE{{\mathbf{E}}}
\def\bF{{\mathbf{F}}}
\def\bG{{\mathbf{G}}}
\def\bH{{\mathbf{H}}}
\def\bI{{\mathbf{I}}}
\def\bJ{{\mathbf{J}}}
\def\bK{{\mathbf{K}}}
\def\bL{{\mathbf{L}}}
\def\bM{{\mathbf{M}}}
\def\bN{{\mathbf{N}}}
\def\bO{{\mathbf{O}}}
\def\bP{{\mathbf{P}}}
\def\bQ{{\mathbf{Q}}}
\def\bR{{\mathbf{R}}}
\def\bS{{\mathbf{S}}}
\def\bT{{\mathbf{T}}}
\def\bU{{\mathbf{U}}}
\def\bV{{\mathbf{V}}}
\def\bW{{\mathbf{W}}}
\def\bX{{\mathbf{X}}}
\def\bY{{\mathbf{Y}}}
\def\bZ{{\mathbf{Z}}}

\def\mA{{\mathbb{A}}}
\def\mB{{\mathbb{B}}}
\def\mC{{\mathbb{C}}}
\def\mD{{\mathbb{D}}}
\def\mE{{\mathbb{E}}}
\def\mF{{\mathbb{F}}}
\def\mG{{\mathbb{G}}}
\def\mH{{\mathbb{H}}}
\def\mI{{\mathbb{I}}}
\def\mJ{{\mathbb{J}}}
\def\mK{{\mathbb{K}}}
\def\mL{{\mathbb{L}}}
\def\mM{{\mathbb{M}}}
\def\mN{{\mathbb{N}}}
\def\mO{{\mathbb{O}}}
\def\mP{{\mathbb{P}}}
\def\mQ{{\mathbb{Q}}}
\def\mR{{\mathbb{R}}}
\def\mS{{\mathbb{S}}}
\def\mT{{\mathbb{T}}}
\def\mU{{\mathbb{U}}}
\def\mV{{\mathbb{V}}}
\def\mW{{\mathbb{W}}}
\def\mX{{\mathbb{X}}}
\def\mY{{\mathbb{Y}}}
\def\mZ{{\mathbb{Z}}}

\def\cA{\mathcal{A}}
\def\cB{\mathcal{B}}
\def\cC{\mathcal{C}}
\def\cD{\mathcal{D}}
\def\cE{\mathcal{E}}
\def\cF{\mathcal{F}}
\def\cG{\mathcal{G}}
\def\cH{\mathcal{H}}
\def\cI{\mathcal{I}}
\def\cJ{\mathcal{J}}
\def\cK{\mathcal{K}}
\def\cL{\mathcal{L}}
\def\cM{\mathcal{M}}
\def\cN{\mathcal{N}}
\def\cO{\mathcal{O}}
\def\cP{\mathcal{P}}
\def\cQ{\mathcal{Q}}
\def\cR{\mathcal{R}}
\def\cS{\mathcal{S}}
\def\cT{\mathcal{T}}
\def\cU{\mathcal{U}}
\def\cV{\mathcal{V}}
\def\cW{\mathcal{W}}
\def\cX{\mathcal{X}}
\def\cY{\mathcal{Y}}
\def\cZ{\mathcal{Z}}
\def\cd{\mathcal{d}}
\def\Mt{M_{t}}
\def\Mr{M_{r}}
\def\O{\Omega_{M_{t}}}
\newcommand{\figref}[1]{{Fig.}~\ref{#1}}
\newcommand{\tabref}[1]{{Table}~\ref{#1}}

\newcommand{\var}{\mathsf{var}}
\newcommand{\fb}{\tx{fb}}
\newcommand{\nf}{\tx{nf}}
\newcommand{\BC}{\tx{(bc)}}
\newcommand{\MAC}{\tx{(mac)}}
\newcommand{\Pout}{p_{\mathsf{out}}}
\newcommand{\nnn}{\nn\\}
\newcommand{\FB}{\tx{FB}}
\newcommand{\TX}{\tx{TX}}
\newcommand{\RX}{\tx{RX}}
\renewcommand{\mod}{\tx{mod}}
\newcommand{\m}[1]{\mathbf{#1}}
\newcommand{\td}[1]{\tilde{#1}}
\newcommand{\sbf}[1]{\scriptsize{\textbf{#1}}}
\newcommand{\stxt}[1]{\scriptsize{\textrm{#1}}}
\newcommand{\suml}[2]{\sum\limits_{#1}^{#2}}
\newcommand{\sumlk}{\sum\limits_{k=0}^{K-1}}
\newcommand{\eqhsp}{\hspace{10 pt}}
\newcommand{\tx}[1]{\texttt{#1}}
\newcommand{\Hz}{\ \tx{Hz}}
\newcommand{\sinc}{\tx{sinc}}
\newcommand{\tr}{\mathrm{tr}}
\newcommand{\diag}{\mathrm{diag}}
\newcommand{\MAI}{\tx{MAI}}
\newcommand{\ISI}{\tx{ISI}}
\newcommand{\IBI}{\tx{IBI}}
\newcommand{\CN}{\tx{CN}}
\newcommand{\CP}{\tx{CP}}
\newcommand{\ZP}{\tx{ZP}}
\newcommand{\ZF}{\tx{ZF}}
\newcommand{\SP}{\tx{SP}}
\newcommand{\MMSE}{\tx{MMSE}}
\newcommand{\MINF}{\tx{MINF}}
\newcommand{\RC}{\tx{MP}}
\newcommand{\MBER}{\tx{MBER}}
\newcommand{\MSNR}{\tx{MSNR}}
\newcommand{\MCAP}{\tx{MCAP}}
\newcommand{\vol}{\tx{vol}}
\newcommand{\ah}{\hat{g}}
\newcommand{\tg}{\tilde{g}}
\newcommand{\teta}{\tilde{\eta}}
\newcommand{\heta}{\hat{\eta}}
\newcommand{\uh}{\m{\hat{s}}}
\newcommand{\eh}{\m{\hat{\eta}}}
\newcommand{\hv}{\m{h}}
\newcommand{\hh}{\m{\hat{h}}}
\newcommand{\Po}{P_{\mathrm{out}}}
\newcommand{\Poh}{\hat{P}_{\mathrm{out}}}
\newcommand{\Ph}{\hat{\gamma}}
\newcommand{\mat}[1]{\begin{matrix}#1\end{matrix}}
\newcommand{\ud}{^{\dagger}}
\newcommand{\C}{\mathcal{C}}
\newcommand{\nn}{\nonumber}
\newcommand{\nInf}{U\rightarrow \infty}

\title{
\begin{flushleft}\vspace{-2cm}
\author{Zhenyu~Kang,~\IEEEmembership{Graduate~Student~Member,~IEEE}, Changsheng~You,~\IEEEmembership{Member,~IEEE},\\
	 and Rui Zhang,~\IEEEmembership{Fellow,~IEEE}  \thanks{\noindent Z. Kang is with the Department of Electrical and Computer Engineering, National University of Singapore, Singapore 117583 (email: zhenyu\_kang@u.nus.edu). 
C. You is with the Department of Electronic and Electrical Engineering, Southern University of Science and Technology (SUSTech), Shenzhen 518055, China (e-mail: youcs@sustech.edu.cn).
R. Zhang is with School of Science and Engineering, Shenzhen Research Institute of Big Data, The Chinese University of Hong Kong, Shenzhen, Guangdong 518172, China (e-mail: rzhang@cuhk.edu.cn). He is also with the Department of Electrical and Computer Engineering, National University of Singapore, Singapore 117583 (e-mail: elezhang@nus.edu.sg). (Corresponding author: Rui Zhang and Changsheng You).

This work was supported in part by the 2022 Stable Research Program of Higher Education of China under Grant 20220817144726001, the National Natural Science Foundation of China under Grant 62201242, the Ministry of Education, Singapore under Award T2EP50120-0024, the Advanced Research and Technology Innovation Centre (ARTIC) of National University of Singapore under Research Grant R-261-518-005-720, and The Guangdong Provincial Key Laboratory of Big Data Computing.
}
  \vspace{-1cm}}
\end{flushleft}
\huge 
Double-Active-IRS Aided Wireless Communication: Deployment Optimization and Capacity Scaling} 
\maketitle
\begin{abstract}
In this letter, {\color{black}we consider a \textit{double}-active-intelligent reflecting surface (IRS) aided wireless communication system, where two active IRSs are properly deployed to assist the communication from a base station (BS) to multiple users located in a given zone via the double-reflection links.}
Under the assumption of fixed \textit{per-element} amplification power for each active-IRS element, we formulate a rate maximization problem subject to practical constraints on the reflection design, elements allocation, and placement of active IRSs.
{\color{black}To solve this non-convex problem, we first obtain the optimal active-IRS reflections and BS beamforming, based on which we then jointly optimize the active-IRS elements allocation and placement by using the alternating optimization (AO) method.}
Moreover, we show that given the fixed per-element amplification power, the received signal-to-noise ratio (SNR) at the user increases asymptotically with the \textit{square} of the number of reflecting elements; while given the fixed number of reflecting elements, the SNR does not increase with the per-element amplification power when it is asymptotically large.
Last, numerical results are presented to validate the effectiveness of the proposed AO-based algorithm and compare the rate performance of the considered double-active-IRS aided wireless system with various benchmark systems.

\end{abstract}
\begin{IEEEkeywords}
Intelligent reflecting surface (IRS), active IRS, double IRS, capacity scaling order.
\end{IEEEkeywords}
\vspace{-15pt}
\section{Introduction}
\renewcommand{\baselinestretch}{1}
Intelligent reflecting surface (IRS) has emerged as a promising technology to smartly reconfigure the wireless radio propagation environment \cite{9326394,9591503}.
Specifically, IRS is a low-cost meta-surface consisting of massive reflecting elements that can reflect incident signals with flexibly tuned phase shifts and/or amplitudes to enhance desired signal power or suppress undesired interference \cite{9424177}.
However, the conventional IRS with fully passive reflecting elements suffers severe product-distance path loss in practice, which significantly limits the power of IRS-reflected signals.

To address this issue, a new type of IRS, called \textit{active} IRS, has been recently proposed, which enables simultaneous signal reflection and amplification by using reflection-type amplifiers, hence more effectively compensating the severe path loss of passive IRS \cite{9377648,9944635}.
Specifically, it was shown in \cite{9424177,9734027} that given the same IRS location, active IRS achieves a higher rate than passive IRS thanks to the appealing amplification gain.
Besides, the authors in \cite{9530750} showed that active IRS should be properly deployed between the transmitter and receiver to balance the trade-off between the signal and noise amplification, which is in sharp contrast to the case of passive IRS that should be deployed near the transmitter or receiver to minimize the cascaded channel path loss.
However, existing works on active IRS have ignored the cooperation between them, which, however, has the potential to achieve higher channel capacity than the single-active-IRS system, as shown in the case of passive IRS \cite{9362274,9060923,9241706}.
Generally speaking, the design for double-active-IRS aided systems is more complicated than the single-active-IRS counterpart, as elaborated next.
First, the active-IRS placement needs to be carefully devised to balance the path loss of different IRS-related channels.
Second, it is necessary to properly assign the total reflecting elements to the two IRSs to balance the multiplicative beamforming gain and the amplification noise power given the fixed total number of elements.
To the authors' best knowledge, the design of efficient double-active-IRS deployment and its rate performance comparison with the single-active-IRS counterpart have not been studied yet.

To answer the above questions, {\color{black}we study in this letter a double-active-IRS aided wireless communication system as illustrated in Fig. \ref{sysmod}, where two active IRSs are deployed to assist the communication between a multi-antenna base station (BS) and multiple single-antenna users.}
Note that unlike existing works that mostly considered the total active-IRS amplification power constraint (e.g., \cite{9377648,9530750,9734027,9944635}), we consider in this work the new \textit{per-element} amplification power constraint for each active-IRS reflecting element\cite{6019001}.\footnote{{\color{black}Note that although the active IRS with per-element amplification power control requires higher hardware cost than that with the total amplification power, it leads to lower computational complexity, more flexible control, and better fault tolerance.}} 
{\color{black}Thereby, we formulate an optimization problem to maximize the achievable rate of the double-active-IRS system subject to practical constraints on the reflection and beamforming design, elements allocation, and placement of active IRSs.}
To solve this non-convex problem, {\color{black}we first obtain the optimal active-IRS reflections and BS beamforming}, based on which we then jointly optimize the active-IRS elements allocation and placement by using the alternating optimization (AO) method.
Moreover, we analytically characterize the system capacity scaling orders with respect to (w.r.t.) the number of reflecting elements and the per-element amplification power.
Last, numerical results are provided to evaluate the proposed algorithm and compare the rate performance of the double-active-IRS aided wireless system with various benchmark systems.

\begin{figure}[t]
\centerline{\includegraphics[width=2.5in]{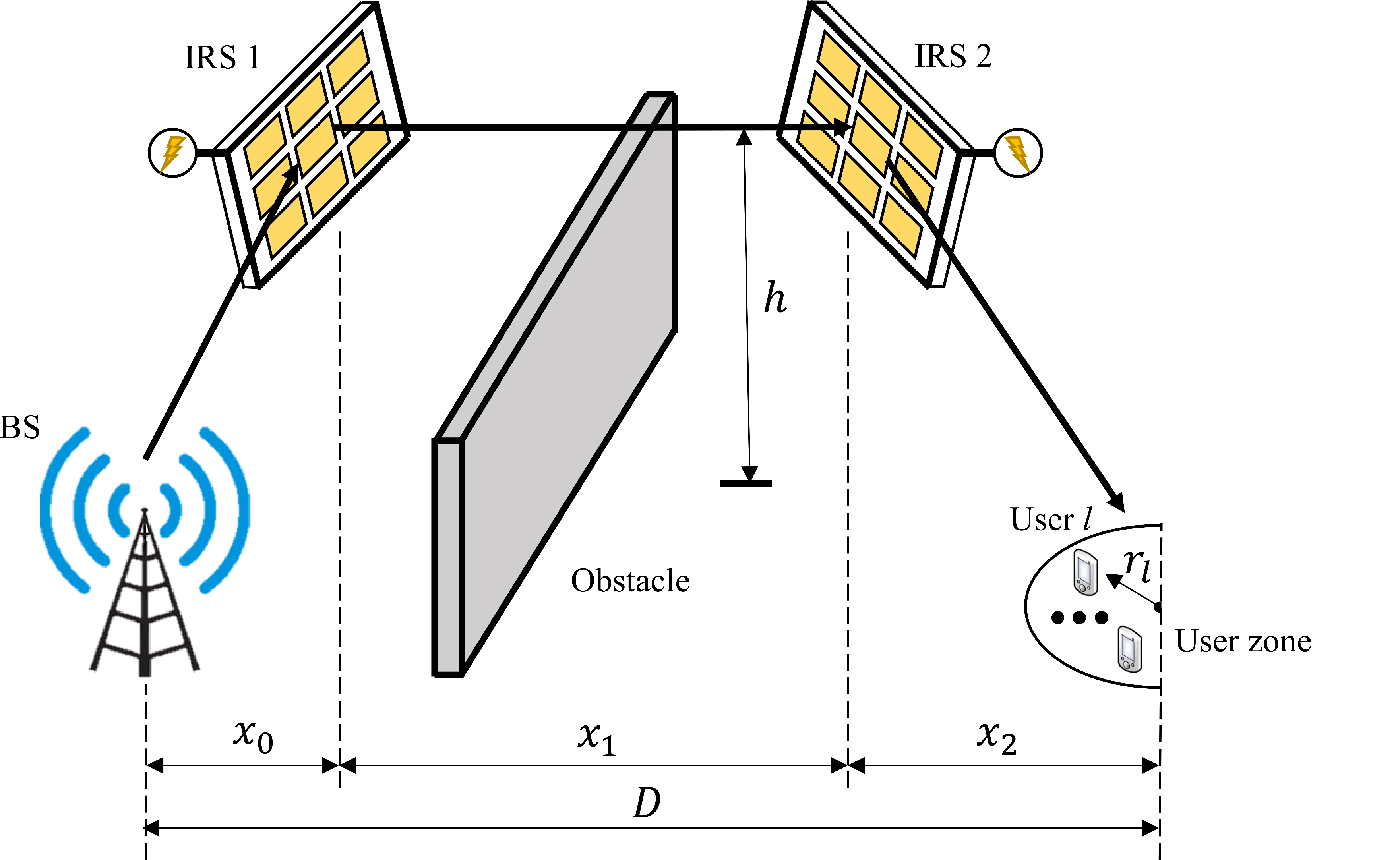}}
\caption{{\color{black}The double-active-IRS aided wireless communication system.}}\label{sysmod}
\vspace{-0.7cm}
\end{figure}

\vspace{-0.2cm}
\section{System Model and Problem Formulation}
\subsection{System Model}
\vspace{-2pt}
{\color{black}Consider a double-active-IRS aided wireless communication system as shown in Fig. \ref{sysmod}, where two active IRSs are properly deployed to assist the communication from a multi-antenna BS to $\ell$ single-antenna users in a given zone.} We consider a challenging scenario where the direct BS-user and single-reflection links are blocked; thus the BS can communicate with the user via a double-reflection link only, i.e., BS$\to$IRS 1$\to$IRS 2$\to$user.
{\color{black}We consider the time-division multiple access (TDMA) schemes, where the BS serves different users in different time\footnote{{\color{black}As the first study on double-active-IRS deployment, this work aims to provide useful insights into its deployment and system capacity scaling order. The active-IRS beamforming design for other multi-user cases, e.g., non-orthogonal multiple access (NOMA), will be studied in our future work.}}.}
{\color{black}Without loss of generality, we consider a three-dimensional (3D) Cartesian coordinate system, where the BS, the user-zone center, and two IRSs are located at $\boldsymbol{u}_{\rm B}=(0,0,0)$, $\boldsymbol{u}_{\rm u}=(D,0,0)$, $\boldsymbol{u}_{1}=(x_0,0,H)$ and $\boldsymbol{u}_{2}=(x_0+x_1,0,H)$, respectively, where the horizontal locations of the two IRSs (or equivalently $x_0$, $x_1$) need to be designed with $r_{\ell}$ denoting the distance between the center of user zone and $\ell$-th user and $\theta_{\ell}$ denoting its azimuth angle, $\ell\in\{1,\cdots,L\}$.} Note that in this letter, we do not consider dynamic IRS deployment, since in practice, IRSs are usually at fixed locations once deployed.
\subsubsection{Active-IRS Model}
Let $M$ denote the deployment budget on the total number of reflecting elements with $M_1$ and $M_2$ representing the numbers of reflecting elements at IRSs 1 and 2, respectively.
{\color{black}For each active IRS $k$ in the $\ell$-th time slot with $k\in\{1,2\}$, let $\boldsymbol{\Psi}_{k,\ell}\triangleq\boldsymbol{A}_{k,\ell}\boldsymbol{\Phi}_{k,\ell}$ denote its reflection matrix where $\boldsymbol{A}_{k,\ell}\triangleq\diag(a_{\mathrm I_k,\ell,1},\cdots,a_{\mathrm I_k,\ell,M_k})$ and $\boldsymbol{\Phi}_{k,\ell}\triangleq\diag(e^{\jmath\phi_{\mathrm I_k,\ell,1}},\cdots,e^{\jmath\phi_{\mathrm I_k,\ell,M_k}})$ denote the amplification matrix and phase-shift matrix. Herein, $a_{\mathrm I_k,\ell,m_k}$ and $e^{\jmath\phi_{\mathrm I_k,\ell,m_k}}$ represent the amplification factor and phase shift of each element.}
We consider the fixed per-element amplification power for the two active IRSs, where a separate power supply of $P_{\rm e}$ is connected to each reflecting element for signal amplification. 
\subsubsection{Channel Model}
For ease of analysis, {\color{black}we consider the line-of-sight (LoS) channel model for all available links, for which the channel state information (CSI) on involved links are assumed to be known\footnote{{\color{black}The results obtained in this letter can be extended to more general cases with non-LoS paths, e.g., the Rician fading channel. In practice, the effect of channel estimation error can be mitigated by designing robust IRS beamforming and deployment based on the distribution of channel estimation error}.}. Note that the design of active IRS elements allocation and placement mainly depends on the LoS channel path loss, which can be obtained based on the locations of the BS, active IRSs and user.}
Let $\theta_{i,j}^{\mathrm{a}}(\vartheta_{i,j}^{\mathrm{a}})\in[0,\pi]$ denote the azimuth (elevation) angle-of-arrival (AoA) at node $j$ from node $i$, $\theta_{i,j}^{\mathrm{a}}(\vartheta_{i,j}^{\mathrm{d}})\in[0,\pi]$ denote the azimuth (elevation) angle-of-departure (AoD) from node $i$ to node $j$, and $\boldsymbol{a}_{\mathrm{r}}$ denote the receive steering vector, given by $\boldsymbol{a}_{\mathrm{r}}\left(\theta_{i,j}^{\mathrm{a}}, \vartheta_{i,j}^{\mathrm{a}}, N_j\right)\triangleq\boldsymbol{w}\!\left(\frac{2 d_{\mathrm{I}}}{\lambda} \!\cos \theta_{i,j}^{\mathrm{a}} \sin \vartheta_{i,j}^{\mathrm{a}}, N_{j, \mathrm{h}}\right)\!\otimes\!\boldsymbol{w}\!\left(\frac{2 d_{\mathrm{I}}}{\lambda} \!\cos \vartheta_{i,j}^{\mathrm{a}}, N_{j, \mathrm{v}}\right)$ with $\otimes$ denoting the Kronecker product, $d_{\mathrm{I}}$ denoting the IRS reflecting element spacing and $\lambda$ denoting the signal wavelength, $N_{j, \mathrm{h}}$ ($N_{j, \mathrm{v}}$) denoting the number of horizontal (vertical) elements of node $j$, and $\boldsymbol{w}(\varsigma, N) \!\!\triangleq\!\!\left[1, \cdots, e^{-\jmath \pi(N-1) \varsigma}\right]^T$ denoting the steering vector function. The transmit steering vector, $\boldsymbol{a}_{\mathrm{t}}$, can be modeled similar to $\boldsymbol{a}_{\mathrm{r}}$.
As such, the inter-IRS channel from IRS 1 to IRS 2, denoted by $\boldsymbol{G}\in\mathbb{C}^{M_2\times M_1}$, can be modeled as
\begin{equation}\vspace{-0.1cm}
    \boldsymbol{G}\triangleq g\boldsymbol{a}_{\mathrm{r}}\left(\theta_{\mathrm I_1, \mathrm I_2}^{\mathrm{a}}, \vartheta_{\mathrm I_1, \mathrm I_2}^{\mathrm{a}}, M_2\right) \boldsymbol{a}_{\mathrm{t}}^H\left(\theta_{\mathrm I_1, \mathrm I_2}^{\mathrm{d}}, \vartheta_{\mathrm I_1, \mathrm I_2}^{\mathrm{d}}, M_1\right),\vspace{-0cm}
\end{equation}
where $g\!\!=\!\!\beta^{\frac{1}{2}} e^{-\jmath \frac{2 \pi}{\lambda} d_1} / d_1^{\frac{\alpha}{2}}$ denotes the complex channel gain of the inter-IRS link with $\beta$ denoting the channel power gain at a reference distance of 1 meter (m), $d_1=x_1$ denoting the distance from IRS 1 to IRS 2, and $\alpha$ denoting the path loss exponent.
{\color{black}Let $\boldsymbol{\omega}_{\ell}\in \mathbb{C}^{N \times 1}$ denote the normalized transmit beamforming vector of the BS to $\ell$-th user with $N$ denoting the number of BS antennas and $\|\boldsymbol{\omega}_{\ell}\|^2 \leq 1$.} Then, the channel from the BS to IRS 1, denoted by $\boldsymbol{H}_1\in\mathbb{C}^{M_1 \times N}$, can be modeled as $\boldsymbol{N}_1\triangleq h_1\boldsymbol{a}_{\mathrm{r}}\left(\theta_{\rm BS, \rm I_1}^{\mathrm{a}}, \vartheta_{\rm BS, \rm I_1}^{\mathrm{a}}, M_1\right)\boldsymbol{a}_{\mathrm{t}}^H\left(\theta_{\rm BS, \rm I_1}^{\mathrm{d}}, \vartheta_{\rm BS, \rm I_1}^{\mathrm{d}}, N\right)$,
where its complex channel gain is $h_1=\beta^{\frac{1}{2}} e^{-\jmath \frac{2 \pi}{\lambda} d_0} / d_0^{\frac{\alpha}{2}}$ with $d_0 = \sqrt{x_0^2+h^2}$ denoting the BS-IRS 1 distance. 
{\color{black}The channel from IRS 2 to the $\ell$-th user, denoted by $\boldsymbol{h}_{2,{\ell}}^H\in\mathbb{C}^{1 \times M_2}$, can be modeled in a similar form as $\boldsymbol{h}_1$ with the IRS 2-user $\ell$ distance given by $d_{2,\ell} = \sqrt{(x_2+r_{\ell}\cos{\theta_{\ell}})^2+h^2+(r_{\ell}\sin{\theta_{\ell}})^2}$.}
As such, the received signal at the user is obtained as\vspace{-5pt}
\begin{equation}
    {\color{black}\!\!y_{\ell}\! \!=\!\!\boldsymbol{h}_{2,{\ell}}^H\!\boldsymbol{\Psi}_{\!2,{\ell}}\boldsymbol{G}\boldsymbol{\Psi}_{\!}\boldsymbol{H}_{\!1}\boldsymbol{\omega}_{\ell}s\!\!+\!\!\boldsymbol{h}_{2,{\ell}}^H\boldsymbol{\Psi}_{\!2,{\ell}}\boldsymbol{G}\boldsymbol{\Psi}_{\!1,{\ell}}\boldsymbol{z}_1\!\!+\!\!\boldsymbol{h}_{2,{\ell}}^H\boldsymbol{\Psi}_{\!2,{\ell}}\boldsymbol{z}_2\!\!+\!\!z_0,}
\end{equation}
where $s\in\mathbb{C}$ denotes the transmitted signal with power $P_{\mathrm{B}}$, $\boldsymbol{z}_k\in\mathbb{C}^{M_k\times 1}$ is the amplification noise induced by IRS $k$ that is assumed to follow the independent circularly symmetric complex Gaussian (CSCG) distribution with mean of zero and variance of $\sigma_{\mathrm I}^2$, i.e., $\boldsymbol{z}_{k} \sim \mathcal{C N}\left(\mathbf{0}_{M_k}, \sigma_{\mathrm{I}}^2 \mathbf{I}_{M_k}\right)$ with $\sigma^2_{\rm I}$ denoting the amplification noise power, and $z_0\sim \mathcal{CN}\left(0, \sigma_0^2 \right)$ is the additive white Gaussian noise at the user with power
$\sigma_0^2$. Note that the received signal at the user is superposed by the desired signal over the double-reflection link (i.e., BS$\to$IRS 1$\to$IRS 2$\to$user), the amplification noise induced by IRS 1 over the IRS 1$\to$IRS 2$\to$user link as well as that induced by IRS 2 over the IRS 2$\to$user link.
{\color{black}As such, the corresponding received signal-to-noise ratio (SNR) at the $\ell$-th user, $\gamma_{\ell}$, is given by\vspace{-0.2cm}
\begin{equation}\label{snr_ori}\vspace{-0cm}
    \gamma_{\ell} = \frac{P_{\mathrm{B}}\left|\boldsymbol{h}_{2,{\ell}}^H\boldsymbol{\Psi}_{2,{\ell}}\boldsymbol{G}\boldsymbol{\Psi}_{1,{\ell}}\boldsymbol{H}_1\boldsymbol{\omega}_{\ell}\right|^2}{\left\|\boldsymbol{h}_{2,{\ell}}^H\boldsymbol{\Psi}_{2,{\ell}}\right\|^2 \sigma_{\mathrm{I}}^2+\left\|\boldsymbol{h}_{2,{\ell}}^H\boldsymbol{\Psi}_{2,{\ell}}\boldsymbol{G}\boldsymbol{\Psi}_{1,{\ell}}\right\|^2 \sigma_{\mathrm{I}}^2+\sigma_0^2},
\end{equation}
and the maximum achievable rate in bits/second/Hertz (bps/Hz) of the $\ell$-th user is $R_{\ell} = \frac{1}{L}\log_2\left(1+\gamma_{\ell}\right)$.}
\vspace{-10pt}
\subsection{Problem Formulation}
\vspace{-2pt}
{\color{black}We aim to maximize the minimum achievable rate among all users by optimizing the BS beamforming $\boldsymbol{\Omega}\triangleq\{\boldsymbol{\omega}_{\ell}\}_{\ell=1}^{L}$, active IRSs' amplification factors $\boldsymbol{A}\triangleq\{\boldsymbol{A}_{1,{\ell}},\boldsymbol{A}_{2,{\ell}}\}_{\ell=1}^{L}$, phase shifts $\boldsymbol{\Phi}\triangleq\{\boldsymbol{\Phi}_{1,{\ell}},\boldsymbol{\Phi}_{2,{\ell}}\}_{\ell=1}^{L}$, elements allocation $\boldsymbol{M}\triangleq\{M_1,M_2\}$, and placement $\boldsymbol{X}\triangleq\{x_0,x_1,x_2\}$, formulated as follows.
\allowdisplaybreaks[4]
\begin{align}\vspace{-5pt}
\nonumber \hspace{-0.2cm}\textrm{(P1)}&\max_{\boldsymbol{\Omega},\boldsymbol{A},\boldsymbol{\Phi},\atop \boldsymbol{M},\boldsymbol{X},\eta} \eta \\
&\hspace{-0.4cm}\quad~~\textrm{s.t.}~\quad \eta\leq R_{\ell},\ell=1,\cdots,L,\\
&\hspace{-0.4cm}\qquad\qquad~ P_{\rm B}[\boldsymbol{\Psi}_{1,{\ell}}\boldsymbol{H}_1\boldsymbol{\omega}_{\ell}]_{m_1}^2+\sigma_{\rm I}^2[\boldsymbol{\Psi}_{1,{\ell}}]_{m_1,m_1}^2 \leq P_{\rm e},\nonumber\\
&\quad\qquad\qquad \qquad \quad~~ \ell=1,\cdots,L, m_1=1,\cdots,M_1,\label{cons_powerI1}\\
&\hspace{-0.45cm}\qquad\qquad~ P_{\rm B}[\boldsymbol{\Psi}_{1,{\ell}}\boldsymbol{G}\boldsymbol{\Psi}_{1,{\ell}}\boldsymbol{H}_1\boldsymbol{\omega}_{\ell}]_{m_2}^2+\sigma_{\rm I}^2\left\|[\boldsymbol{\Psi}_{2,{\ell}}\boldsymbol{G}\boldsymbol{\Psi}_{1,{\ell}}]_{m_2,1:M_1}\right\|^2\nonumber\\
&\qquad~~\!\!+\!\sigma_{\rm I}^2[\boldsymbol{\Psi}_{1,{\ell}}]_{m_2,m_2}^2 \!\leq\! P_{\rm e},\!\ell\!\!=\!\!1,\!\cdots\!,L,m_2\!\!=\!\!1,\!\cdots\!,M_2,\label{cons_powerI2}\\
&\hspace{-0.4cm}\qquad\qquad \|\boldsymbol{\omega}_{\ell}\|^2 \leq 1,{\ell}=1,\!\cdots\!,L,\\
&\hspace{-0.4cm}\qquad\qquad \left|[\boldsymbol{\Phi}_{1,{\ell}}]_{m_1,m_1}\right|\!=\!1,{\ell}\!=\!1,\!\cdots\!,L, m_1\!=\!1,\!\cdots\!,M_1,\\
&\hspace{-0.4cm}\qquad\qquad \left|[\boldsymbol{\Phi}_{2,{\ell}}]_{m_2,m_2}\right|\!=\!1, \ell\!=\!1,\cdots,L,m_2\!=\!1,\!\cdots\!,M_2,\label{cons_phaseI2}\\
&\hspace{-0.4cm}\qquad\qquad~ M_1+M_2 = M, M_1 \in \mathbb{N}, M_2 \in \mathbb{N},\label{cons_NumEle}\\
&\hspace{-0.4cm}\qquad\qquad~ x_0+x_1+x_2 = D,\label{cons_x}\vspace{-1cm}
\end{align}
where $[\cdot]_{m,n}$ denotes the $(m,n)$-th entry of a matrix, and the constraints \eqref{cons_powerI1} and \eqref{cons_powerI2} indicate that the amplification factor of each active element is constrained by its per-element amplification power budget, $P_{\rm e}$.}

\vspace{-10pt}
\section{Proposed Solution to Problem (P1)}
\vspace{-2pt}
Problem (P1) is a non-convex optimization problem due to the non-concave objective function, the unit-modulus phase-shift constraints, and the integer elements allocation constraint, thus making it difficult to be solved optimally. 
{\color{black}To address this issue, we propose a two-layer AO algorithm that iteratively optimizes the joint BS and active-IRS beamforming, as well as the active-IRS elements allocation and placement.}

\vspace{-0.3cm}
\subsection{Joint BS and Active-IRS Beamforming Optimization}\vspace{-0.1cm}
{\color{black}First, given the feasible active-IRS elements allocation and placement, the optimization problem (P1) reduces to
\begin{align}
\nonumber \textrm{(P2)}~~~&\max_{\boldsymbol{A},\boldsymbol{\Phi},\hat\eta} \qquad \hat\eta \\
&~~\!\textrm{s.t.}~~~\quad \hat\eta\leq \gamma_{\ell},\ell=1,\cdots,L,\\
&~~\qquad\quad\eqref{cons_powerI1}-\eqref{cons_phaseI2}.\nonumber\vspace{-1cm}
\end{align}}
To solve problem (P2), we first decompose the inter-IRS channel as follows:\vspace{-5pt}
\begin{equation}
    \boldsymbol{G}\!=\!\underbrace{\sqrt{g} \boldsymbol{a}_{\mathrm{r}}\!\left(\theta_{\rm I_1, \rm I_2}^{\mathrm{a}}, \vartheta_{\rm I_1, \rm I_2}^{\mathrm{a}}, M_2\right)}_{\boldsymbol{g}_1} \underbrace{\sqrt{g} \boldsymbol{a}_{\mathrm{t}}^H\!\left(\theta_{\rm I_1, \rm I_2}^{\mathrm{d}}, \vartheta_{\rm I_1, \rm I_2}^{\mathrm{d}}, M_1\right)}_{\boldsymbol{g}_2^H}.\vspace{-5pt}
\end{equation}
Then, we have the following result.
\begin{lemma}\label{lem_oa}
\emph{The optimal solution to problem (P2) is}\vspace{-5pt}
{\color{black}\begin{align}    &\hspace{-0.1cm}\boldsymbol{\omega}_{\ell}\!=\!\boldsymbol{a}_{\mathrm{t}}\left(\theta_{\mathrm{BS}, \mathrm{I}_1}^{\mathrm{d}}, \vartheta_{\mathrm{BS}, \mathrm{I}_1}^{\mathrm{d}}, N\right)\!/\!\left\|\boldsymbol{a}_{\mathrm{t}}\left(\theta_{\mathrm{BS}, \mathrm{I}_1}^{\mathrm{d}}, \vartheta_{\mathrm{BS}, \mathrm{I}_1}^{\mathrm{d}}, N\right)\right\|,\forall \ell,\label{phase_w}\\
    &\hspace{-0.2cm}\phi_{\mathrm{I_1},\ell,m_1}\!=\!\arg \left(\left[\boldsymbol{g}_2\right]_{m_1}\right)\!-\!\arg \left(\left[\boldsymbol{h}_1\right]_{m_1}\right), \forall \ell,m_1,\label{phase_I1}\\
    &\hspace{-0.2cm}\phi_{\mathrm{I_2},l,m_2}\!=\!\arg \left(\left[\boldsymbol{h}_{2,{\ell}}\right]_{m_2}\right)\!-\!\arg \left(\left[\boldsymbol{g}_1\right]_{m_2}\right),\forall \ell,m_2,\label{phase_I2},\\    &\hspace{-0.2cm}a_{\mathrm{I_1},\ell,m_1}=a_1\triangleq\frac{P_{\mathrm{e}}d_0^2}{P_{\mathrm{B}} \beta+\sigma_{\mathrm{I}}^2d_0^2},\forall \ell,m_1,\label{optAFI1}\\
    &\hspace{-0.2cm}a_{\mathrm{I_2},l,m_2}\!=\!\frac{P_{\mathrm{e}}d_0^2 d_1^2}{P_{\mathrm{B}} \!\beta^2\! M_1^2\! a_1^2\!+\!\sigma_{\mathrm{I}}^2 \!\beta\! M_1\! a_1^2\!d_0^2\!+\!\sigma_{\mathrm{I}}^2\!d_0^2\! d_1^2},\forall \ell,m_2.\label{optAFI2}\vspace{-10pt}
\end{align}}
\end{lemma}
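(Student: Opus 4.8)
The plan is to exploit the rank-one line-of-sight structure of every channel, so that all matrix quantities in \eqref{snr_ori} collapse to scalars, and then to peel off the variable blocks one at a time. Two reductions set the stage. First, since users are served in orthogonal slots and $\boldsymbol{\omega}_\ell,\boldsymbol{\Psi}_{1,\ell},\boldsymbol{\Psi}_{2,\ell}$ are free per slot, problem (P2) decouples over $\ell$: the max--min over $\{\gamma_\ell\}$ reduces to maximizing each $\gamma_\ell$ separately under the constraints \eqref{cons_powerI1}--\eqref{cons_phaseI2} carrying that index, so it suffices to treat one user. Second, writing $\boldsymbol{H}_1\boldsymbol{\omega}_\ell=t_\ell\,\boldsymbol{h}_1$ with $\boldsymbol{h}_1\triangleq h_1\boldsymbol{a}_{\mathrm r}(\theta^{\mathrm a}_{\mathrm{BS},\mathrm I_1},\vartheta^{\mathrm a}_{\mathrm{BS},\mathrm I_1},M_1)$ and $t_\ell\triangleq\boldsymbol{a}_{\mathrm t}^H(\theta^{\mathrm d}_{\mathrm{BS},\mathrm I_1},\vartheta^{\mathrm d}_{\mathrm{BS},\mathrm I_1},N)\boldsymbol{\omega}_\ell$, and using $\boldsymbol{G}=\boldsymbol{g}_1\boldsymbol{g}_2^H$, the cascaded channel factors as $\boldsymbol{h}_{2,\ell}^H\boldsymbol{\Psi}_{2,\ell}\boldsymbol{G}\boldsymbol{\Psi}_{1,\ell}\boldsymbol{H}_1\boldsymbol{\omega}_\ell=t_\ell\big(\boldsymbol{h}_{2,\ell}^H\boldsymbol{\Psi}_{2,\ell}\boldsymbol{g}_1\big)\big(\boldsymbol{g}_2^H\boldsymbol{\Psi}_{1,\ell}\boldsymbol{h}_1\big)$, so $\boldsymbol{\omega}_\ell$ affects $\gamma_\ell$ and the constraints only through $|t_\ell|$, and the $\boldsymbol{\Psi}_{k,\ell}$-phases only through $c_1\triangleq\boldsymbol{g}_2^H\boldsymbol{\Psi}_{1,\ell}\boldsymbol{h}_1$ and $c_2\triangleq\boldsymbol{h}_{2,\ell}^H\boldsymbol{\Psi}_{2,\ell}\boldsymbol{g}_1$.

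I would then optimize the phases. As every channel entry has unit modulus, $|c_1|\le|g|^{1/2}|h_1|\sum_{m_1}a_{\mathrm I_1,\ell,m_1}$ and $|c_2|\le|g|^{1/2}|h_{2,\ell}|\sum_{m_2}a_{\mathrm I_2,\ell,m_2}$, with equality exactly under the term-wise co-phasing \eqref{phase_I1}, \eqref{phase_I2}, which is independent of the amplification factors. This is harmless for the rest: the per-element powers at IRS~1 in \eqref{cons_powerI1} are phase-free; the numerator of \eqref{snr_ori} depends on the phases only through $|c_1|^2|c_2|^2$; the first noise term $\sigma_{\mathrm I}^2\|\boldsymbol{h}_{2,\ell}^H\boldsymbol{\Psi}_{2,\ell}\|^2$ is phase-free; and the second equals $\sigma_{\mathrm I}^2|g||c_2|^2\sum_{m_1}a_{\mathrm I_1,\ell,m_1}^2$, so raising $|c_2|$ scales the numerator and this term together — a one-line check that $x\mapsto Ax/(B+Cx+D)$ is increasing for $A,B,C,D\ge 0$ confirms it helps. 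The only loose end is that \eqref{cons_powerI2} grows with $|c_1|^2$, which is absorbed when $a_{\mathrm I_2}$ is chosen next.

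With the phases fixed, $\gamma_\ell$ takes the schematic form $\gamma_\ell=\dfrac{A\,b_1^2b_2^2}{B\,\tilde b_2^2+C\,b_2^2\tilde b_1^2+\sigma_0^2}$ with $b_k\triangleq\sum_{m_k}a_{\mathrm I_k,\ell,m_k}$, $\tilde b_k^2\triangleq\sum_{m_k}a_{\mathrm I_k,\ell,m_k}^2$, and $A,B,C>0$ depending only on the channel gains and $|t_\ell|^2$. Since all elements of a given IRS see the same channel magnitude, \eqref{cons_powerI1} is the identical interval $a_{\mathrm I_1,\ell,m_1}^2\le P_{\mathrm e}/(P_{\mathrm B}|h_1|^2|t_\ell|^2+\sigma_{\mathrm I}^2)$ for every $m_1$, and, once $\{a_{\mathrm I_1,\ell,m_1}\}$ are frozen, \eqref{cons_powerI2} is an identical interval in every $a_{\mathrm I_2,\ell,m_2}$. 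Three elementary facts then pin everything down: (i) for fixed $\tilde b_k^2$, Cauchy--Schwarz gives $b_k^2\le M_k\tilde b_k^2$ with equality iff the entries of IRS~$k$ are equal, and $\gamma_\ell$ is increasing in $b_k^2$ at fixed $\tilde b_k^2$, so all elements of each IRS are equal; (ii) with all elements equal, $\gamma_\ell$ is increasing in $\tilde b_2^2$, so \eqref{cons_powerI2} is active; (iii) after eliminating $a_{\mathrm I_2,\ell,m_2}$ via the active \eqref{cons_powerI2} and clearing denominators, $\gamma_\ell$ becomes an increasing Möbius function of $a_{\mathrm I_1,\ell,m_1}^2$, so \eqref{cons_powerI1} is active too. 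Hence every element of IRS~1 equals the boundary value of \eqref{cons_powerI1}, which with $|h_1|^2=\beta d_0^{-\alpha}$ and the optimal $\boldsymbol{\omega}_\ell$ is \eqref{optAFI1}, and substituting the resulting $|c_1|^2,\sum a_{\mathrm I_1,\ell,m_1}^2$ into \eqref{cons_powerI2} at equality gives \eqref{optAFI2}.

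Finally, I would optimize $|t_\ell|$. Writing the amplification factors just obtained as functions of $\tau\triangleq|t_\ell|^2\in(0,\|\boldsymbol{a}_{\mathrm t}(\cdot,N)\|^2]$ and substituting into $\gamma_\ell$, the key simplification is that when both per-element constraints are tight the numerator collapses to $c'\tau/(c''\tau+c''')$ with $c',c'',c'''>0$ — strictly increasing in $\tau$ — while the denominator is a product of factors each decreasing in $\tau$ plus $\sigma_0^2$, hence decreasing; so $\gamma_\ell$ increases in $\tau$, forcing $\tau=\|\boldsymbol{a}_{\mathrm t}(\cdot,N)\|^2$, attained only by the MRT beamformer \eqref{phase_w}, which fixes the remaining constants in \eqref{optAFI1}--\eqref{optAFI2}. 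The real obstacle is precisely this interleaving: \eqref{cons_powerI2} ties the IRS-2 amplification to the IRS-1 amplification and to $|t_\ell|^2$, so the greedy moves (``co-phase'', ``equalize the elements'', ``saturate the power'', ``use MRT'') cannot be justified in isolation — one must eliminate the dominated variables explicitly and verify a single global Möbius-type monotonicity; the remaining bookkeeping is to check nonnegativity of all intermediate quantities and that \eqref{cons_powerI1}--\eqref{cons_powerI2} can hold with equality simultaneously at the claimed point.
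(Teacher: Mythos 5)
Your proposal is correct and takes essentially the same route as the paper's own (two-sentence) proof: co-phase the double-reflection link, saturate both per-element power constraints, and use MRT at the BS, with you additionally supplying the Cauchy--Schwarz and M\"obius-monotonicity arguments---including the check that co-phasing IRS~1 remains optimal despite its tightening effect on \eqref{cons_powerI2}---that the paper leaves implicit. One cosmetic point: your derivation correctly produces the \emph{squared} amplification factors $a_1^2,a_2^2$ together with a factor $N$ from $|t_\ell|^2$, so the missing square and $N$ in \eqref{optAFI1}--\eqref{optAFI2} are typos in the lemma's statement rather than gaps in your argument.
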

\begin{proof}
{The optimal phase shifts are obtained by phase-aligning the double-reflection channel and the optimal amplification factors are obtained by taking the equalities of the power constraints \eqref{cons_powerI1} and \eqref{cons_powerI2}, which can be shown to hold in the optimal solution to problem (P1).}
\end{proof}\vspace{-10pt}
\vspace{-0.1cm}
\subsection{\!Active-IRS\! Elements\! Allocation \!and\! Placement\! Optimization}
\vspace{-0.1cm}
{\color{black}Next, given the optimal active-IRS reflections and BS beamforming in \eqref{phase_w}--\eqref{optAFI2}}, we jointly optimize the active-IRS elements allocation and placement to maximize the minimum achievable rate.
{\color{black}Specifically, by substituting \eqref{phase_w}--\eqref{optAFI2} into \eqref{snr_ori}, the received SNR at the $\ell$-th user is obtained as $\gamma_{\ell}=NP_{\rm B}\beta^3/\xi_{\ell}$}, where
\vspace{-0.1cm}
{\color{black}\begin{align}\label{xi}
    &\xi_{\ell}\triangleq \frac{\sigma_{\mathrm{I}}^4 \sigma_0^2 d_0^2 d_1^2 d_{2,\ell}^2+P_{\mathrm{B}} \sigma_{\mathrm{I}}^2 \sigma_0^2 \beta d_1^2 d_{2,\ell}^2}{P_e^2 M_1^2 M_2^2}+\frac{\sigma_0^2 \sigma_{\mathrm{I}}^2 \beta d_0^2 d_{2,\ell}^2}{P_{\mathrm{e}} M_1 M_2^2}\nonumber \\
    &\quad+\frac{\sigma_{\mathrm{I}}^4 \beta d_0^2 d_1^2+P_{\mathrm{B}} \sigma_{\mathrm{I}}^2 \beta^2 d_1^2}{P_{\mathrm{e}} M_1^2 M_2}+\frac{P_{\mathrm{B}} \sigma_0^2 \beta^2 d_{2,\ell}^2}{P_{\mathrm{e}} M_2^2}+\frac{\sigma_{\mathrm{I}}^2 \beta^2 d_0^2}{M_1 M_2}.
\end{align}}
{\color{black}Based on the above, the solution to problem (P1) can be obtained by solving the following problem (P3).
\begin{align}
\nonumber \textrm{(P3)}~~~&\min_{\boldsymbol{M},\boldsymbol{X},\tilde\eta} \qquad \tilde\eta \\
&~~\!\textrm{s.t.}~~~\quad \tilde\eta\geq\xi_{\ell},\ell=1,\cdots,L,\\
&~~~~~\qquad\eqref{cons_NumEle},\eqref{cons_x}.\nonumber\vspace{-10pt}
\end{align}
Problem (P3) is a non-convex optimization problem, which can be solved by using the AO method in the next.}

\subsubsection{Active-IRS Elements Allocation}
We first optimize the active-IRS elements allocation given fixed active-IRS placement. To tackle the integer constraint \eqref{cons_NumEle}, we relax the discrete values, $\boldsymbol{M}$, into their continuous counterparts, $\boldsymbol{\tilde{M}}\triangleq\{\tilde{M}_1,\tilde{M}_2\}$.
{\color{black}As such, $\xi_{\ell}$ in (15) can be relaxed as 
\begin{align}\label{xi_M}\vspace{-10pt}
    \hspace{-0.1cm}f_{1,{\ell}}(\boldsymbol{\tilde{M}})\triangleq\frac{B_{1,{\ell}}}{\tilde M_1^2 \tilde M_2^2}\!+\!\frac{B_{2,{\ell}}}{\tilde M_1 \tilde M_2^2}\!+\!\frac{B_3}{\tilde M_1^2 \tilde M_2}\!+\!\frac{B_{4,l}}{ \tilde M_2^2}\!+\!\frac{B_5}{\tilde M_1 \tilde M_2},\vspace{-10pt}
\end{align}
where 
\begin{align}
    &B_{1,{\ell}} = \frac{\sigma_\mathrm I^4 \sigma_0^2 d_0^2 d_1^2 d_{2,\ell}^2\!+\!P_{\mathrm{B}} \sigma_\mathrm I^2 \sigma_0^2 \beta d_1^2 d_{2,\ell}^2}{P_e^2}, B_{2,{\ell}} = \frac{\sigma_0^2 \sigma_{\mathrm{I}}^2 \beta d_0^2 d_{2,\ell}^2}{P_{\mathrm{e}}},\nonumber\\\vspace{-10pt}
    &B_3 \!=\! \frac{\sigma_\mathrm I^4 \beta d_0^2 d_1^2+P_{\mathrm{B}} \sigma_{\mathrm{I}}^2 \beta^2 d_1^2}{P_{\mathrm{e}}},B_{4,l} \!=\! \frac{P_{\mathrm{B}} \sigma_0^2 \beta^2 d_{2,\ell}^2}{P_{\mathrm{e}}},B_5 \!=\! {\sigma_\mathrm I^2 \beta^2 d_0^2}.\nonumber
\end{align}}
{\color{black}Next, by introducing the slack variables $\tilde{m_k} = \log(M_k), k=1,2$, $f_{1,{\ell}}(\boldsymbol{\tilde{M}})$ can be re-expressed as
\begin{align}\label{xi_m}
    &\tilde f_{1,{\ell}}(\tilde{m_1},\tilde{m_2})\triangleq B_{1,{\ell}}e^{-2\tilde m_1-2\tilde m_2}+B_{2,{\ell}}e^{-\tilde m_1-2\tilde m_2}\nonumber\\
    &\qquad~~~~~+B_3e^{-2\tilde m_1-\tilde m_2}+B_{4,l}e^{-2\tilde m_2}+B_5e^{-\tilde m_1-\tilde m_2}.\vspace{-10pt}
\end{align}}
{\color{black}Then, the optimal solution to problem (P3) can be obtained by solving the following problem.
\begin{align}
\nonumber \textrm{{(P3.1)}}~~~&\min_{\tilde m_1,\tilde m_2} ~~~~\tilde\eta \\
&~~\textrm{{s.t.}}~~~~\tilde\eta\geq\tilde f_{1,{\ell}}(\tilde{m_1},\tilde{m_2}),\ell=1,\cdots,L,\\
&~\quad\qquad e^{\tilde m_1}+e^{\tilde m_2} \leq M. \label{cons_NumEle3}\vspace{-10pt}
\end{align}}
It can be proved by contradiction that in the optimal solution to problem (P3.1), the equality in constraint \eqref{cons_NumEle3} always holds. As such, problem (P3.1) is a convex optimization problem, which can be efficiently solved by using the interior-point method.
The integer number of reflecting elements can be reconstructed by rounding the continuous solutions to problem (P3.1).

\subsubsection{Active-IRS Placement Optimization}
Next, we optimize the active-IRS placement given fixed active-IRS elements allocation. To this end, we first rewrite $\xi_{\ell}$ in \eqref{xi} as a function of active-IRS locations, given by
{\color{black}\begin{align}
    &\xi_{\ell}=f_{2,{\ell}}(\boldsymbol{D}) \triangleq C_{1} d_0^2 d_1^2 d_{2,\ell}^2+C_{2} d_0^2 d_1^2+C_{3} d_0^2 d_{2,\ell}^2\nonumber\\
    &\qquad\quad\qquad\qquad+C_{4} d_1^2 d_{2,\ell}^2+C_{5} d_0^2+C_{6} d_1^2+C_{7} d_{2,\ell}^2,\label{obj_Lloc}
\end{align}
where $\boldsymbol{D}\triangleq\{d_0,d_1,\{d_{2,\ell}\}_{\ell=1}^{L}\}$,} and 
\begin{align}
    &C_1 \!=\! \frac{\sigma_{\mathrm{I}}^4 \sigma_0^2}{P_e^2 M_1^2 M_2^2},C_2\!=\!\frac{\sigma_\mathrm I^4 \beta}{P_{\mathrm{e}} M_1^2 M_2},C_3\!=\!\frac{\sigma_0^2 \sigma_{\mathrm{I}}^2 \beta}{P_{\mathrm{e}} M_1 M_2^2},\nonumber\\
    &C_4\!=\!\frac{P_{\mathrm{B}} \sigma_{\mathrm{I}}^2 \sigma_0^2 \beta}{P_e^2 M_1^2 M_2^2},C_5\!=\!\frac{\sigma_{\mathrm{I}}^2 \beta^2}{M_1 M_2},C_6\!=\!\frac{P_{\mathrm{B}} \sigma_{\mathrm{I}}^2 \beta^2}{P_{\mathrm{e}} M_1^2 M_2},C_7\!=\!\frac{P_{\mathrm{B}} \sigma_0^2 \beta^2}{P_{\mathrm{e}} M_2^2}.\nonumber
\end{align}
{\color{black}As such, given the feasible active-IRS elements allocation, problem (P3) reduces to \vspace{-5pt}
\begin{align}
\nonumber \textrm{(P3.2)}~~~&\min_{\boldsymbol{D},\boldsymbol{X},\tilde\eta} \qquad \tilde\eta \\
&~~~\!\textrm{s.t.}~\quad \eqref{cons_x},\nonumber\\
&\qquad\quad \tilde\eta\geq f_{2,{\ell}}(\boldsymbol{D}),\ell=1,\cdots,L,\\
&\qquad\quad d_0^2\geq x_0^2+h^2, d_1\geq x_1,\label{cons_Loc1}\\
&\qquad\quad d_{2,\ell}^2\geq (x_2+r_{\ell}\cos{\theta_{\ell}})^2+h^2+(r_{\ell}\sin{\theta_{\ell}})^2,\nn\\
&\qquad\quad \qquad\qquad\qquad\qquad\qquad\quad \ell=1,\cdots,L,\label{cons_Loc2}\vspace{-10pt}
\end{align}
where the equalities in constrains \eqref{cons_Loc1} and \eqref{cons_Loc2} hold in the optimal solution to problem (P3.2).
Note that problem (P3.2) is non-convex due to its non-convex objective function and constraints. To tackle this difficulty, we introduce the slack variables ${y_k} = \log(d_k),k=0,1$, and ${y_{2,{\ell}}} = \log(d_{2,\ell})$, and thus rewrite $f_{2,{\ell}}(\boldsymbol{D})$ as\vspace{-5pt}
\allowdisplaybreaks[4]
\begin{align}\label{xi_y}
    &\hspace{-0.22cm}\tilde f_{2,{\ell}}(\boldsymbol{Y})\triangleq C_1 e^{2 y_0+2 y_1+2 y_{2,{\ell}}}+C_2 e^{2 y_0+2 y_1}+C_3 e^{2 y_0+2 y_{2,{\ell}}}\nonumber\\
    &\qquad\quad~ +C_4 e^{2 y_1+2 y_{2,{\ell}}}+C_5 e^{2 y_0}+C_6 e^{2 y_1}+C_7 e^{2 y_{2,{\ell}}},\vspace{-8pt}
\end{align}
where $\boldsymbol{Y}\triangleq\{y_0,y_1,\{y_{2,{\ell}}\}_{\ell=1}^{L}\}$.}
{\color{black}Problem (P3.2) can then be approximately reformulated as the following convex optimization problem.
\begin{align}\vspace{-8pt}
\nonumber \textrm{{(P3.3)}}~&\min_{\boldsymbol{X},\boldsymbol{Y},\tilde\eta} \quad \tilde\eta\\
&~~\!\textrm{{s.t.}}~~\quad \eqref{cons_x},\nonumber\\
&\qquad\quad~~ \tilde\eta\geq\tilde f_{2,{\ell}}(\boldsymbol{Y}),\ell=1,\cdots,L,\\
&\qquad\quad~~ e^{2\hat y_0}+2e^{2\hat y_0}(y_0-\hat y_0)\geq x_0^2+h^2, \label{cons_y0}\\
&\qquad\quad~~ e^{\hat y_1}+e^{\hat y_1}(y_1-\hat y_1)\geq x_1,\label{cons_y1}\\
&\qquad\quad~~ e^{2\hat y_{2,{\ell}}}\!+\!2e^{2\hat y_{2,{\ell}}}(y_{2,{\ell}}-\hat y_{2,{\ell}})\geq(x_2+r_{\ell}\cos{\theta_{\ell}})^2\nn\\
&\quad\quad~~ \qquad\quad+h^2+(r_{\ell}\sin{\theta_{\ell}})^2, \ell=1,\cdots,L,\label{cons_y2}\vspace{-8pt}
\end{align}
where $\boldsymbol{\hat Y}\triangleq\{\hat y_0,\hat y_1,\{\hat y_{2,{\ell}}\}_{\ell=1}^{L}\}$.}
Note that $\boldsymbol{\hat Y}$ can be initially set as any feasible solution subject to the constraints \eqref{cons_y0}--\eqref{cons_y2}. {\color{black}As such, the objective value of problem (P3.2) decreases with the iterations by using the successive convex approximation (SCA) method.}

{\color{black}To summarize, the inner-layer AO optimizes the active-IRS elements allocation and locations by iteratively solving problems (P3.1) and (P3.2), and the outer layer solves problems (P2) and (P3) iteratively. As such, the objective value of problem (P1) increases in the iterations, which will be shown in Section V to achieve close performance by the exhaustive search.}
\vspace{-10pt}
\section{Capacity Scaling Order Analysis}
\vspace{-2pt}
In this section, we characterize the capacity scaling orders of the double-active-IRS aided wireless system w.r.t. $M$ and $P_{\mathrm{e}}$.

{\color{black}Based on the optimal active-IRS reflections and BS beamforming in \eqref{phase_w}--\eqref{optAFI2}, we first obtain the system capacity scaling orders w.r.t. asymptotically large $M$ for any user in the served area.
Note that the capacity scaling orders can be obtained based on any feasible active-IRS elements allocation and placement with $d_2$ denoting the IRS 2-user distance, since they do not affect the asymptotic result.}
\vspace{-0.2cm}
\begin{proposition}\label{pro1}
\emph{Given fixed $P_{\rm e}$ and any feasible active-IRS elements allocation and placement, the capacity of the double-active-IRS aided wireless system increases with $M$ as $M\to\infty$ according to}\vspace{-0.2cm}
\begin{equation}
    \lim _{M \rightarrow \infty} \frac{C}{\log _2 M}=2.
\end{equation}
\end{proposition}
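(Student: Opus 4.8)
The plan is to start from the closed-form SNR $\gamma=NP_{\rm B}\beta^3/\xi$ obtained by substituting the optimal reflections and BS beamforming of Lemma~\ref{lem_oa} into \eqref{snr_ori}, with $\xi$ given by \eqref{xi}, and to read off the growth rate of $\gamma$ in $M$ by identifying which of the five additive terms of $\xi$ dominates as $M\to\infty$. Fix any feasible allocation and placement; since the budget is split between the two IRSs in (asymptotically) fixed proportions, write $M_1=c_1M$ and $M_2=c_2M$ with $c_1,c_2>0$ and $c_1+c_2=1$, while the placement constraint \eqref{cons_x} keeps $d_0,d_1,d_2$ within fixed positive bounds.

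Substituting into \eqref{xi}, the first term is $\Theta(M^{-4})$, the second and third are $\Theta(M^{-3})$, and the last two are $\Theta(M^{-2})$, so that
\[
\xi=\frac{1}{M^2}\!\left(\frac{P_{\rm B}\sigma_0^2\beta^2 d_2^2}{P_{\rm e}c_2^2}+\frac{\sigma_{\rm I}^2\beta^2 d_0^2}{c_1 c_2}\right)\!\bigl(1+o(1)\bigr)=:\frac{\kappa}{M^2}\bigl(1+o(1)\bigr),
\]
with $\kappa$ a positive constant independent of $M$. Hence $\gamma=NP_{\rm B}\beta^3/\xi=(NP_{\rm B}\beta^3/\kappa)\,M^2\bigl(1+o(1)\bigr)$ grows quadratically in $M$, and since $\gamma\to\infty$ the link capacity obeys $C=\log_2(1+\gamma)=\log_2\gamma+o(1)=2\log_2 M+\log_2\!\bigl(NP_{\rm B}\beta^3/\kappa\bigr)+o(1)$. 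Dividing by $\log_2 M$ and letting $M\to\infty$ yields $\lim_{M\to\infty}C/\log_2 M=2$.

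There is no deep obstacle here; the work is the bookkeeping that isolates the $\Theta(M^{-2})$ terms of $\xi$ and discards the rest as $o(M^{-2})$. The one point that genuinely needs care is the scaling regime of the allocation: both $M_1$ and $M_2$ must grow proportionally to $M$ for the fourth and fifth terms of \eqref{xi} to be the true leading order — were one IRS kept at a fixed size, the fifth term would become $\Theta(M^{-1})$ and the scaling exponent would collapse to $1$. One should also observe that the placement and the split ratio $(c_1,c_2)$ enter only through the finite constant $\kappa$, which stays bounded away from $0$ and $\infty$, and therefore never affect the exponent of $M$; this is exactly why the scaling order is the same for every feasible allocation and placement, as the proposition asserts.
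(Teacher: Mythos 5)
Your proof is correct and follows essentially the same route as the paper's: substitute the optimal reflections and beamforming into the SNR to get $\gamma = NP_{\rm B}\beta^3/\xi$, write $M_1=\rho M$, $M_2=(1-\rho)M$, and observe that the two $\Theta(M^{-2})$ terms of $\xi$ dominate, giving $\gamma=\Theta(M^2)$ and hence the scaling exponent $2$. Your added remark that both $M_1$ and $M_2$ must grow proportionally with $M$ (lest the exponent collapse to $1$) is a worthwhile clarification of what ``any feasible allocation'' must implicitly mean, but it does not change the argument.
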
\vspace{-0.15cm}
\begin{proof}
Given the optimal active-IRS reflections and BS beamforming in \eqref{phase_w}--\eqref{optAFI2}, and set $M_1 = \rho M$ and $M_2=(1-\rho)M$ with $0<\rho<1$. When $M\to\infty$, we have
\begin{align}
    &\lim _{M \rightarrow \infty}\xi =\lim _{M \rightarrow \infty} \frac{\sigma_\mathrm I^4 \sigma_0^2 d_0^2 d_1^2 d_2^2+P_{\mathrm{B}} \sigma_\mathrm I^2 \sigma_0^2 \beta d_1^2 d_2^2}{P_e^2 M_1^2 M_2^2}+\frac{\sigma_0^2 \sigma_{\mathrm{I}}^2 \beta d_0^2 d_2^2}{P_{\mathrm{e}} M_1 M_2^2}\nonumber\\
    &\quad+\frac{\sigma_\mathrm I^4 \beta d_0^2 d_1^2+P_{\mathrm{B}} \sigma_{\mathrm{I}}^2 \beta^2 d_1^2}{P_{\mathrm{e}} M_1^2 M_2}+\frac{P_{\mathrm{B}} \sigma_0^2 \beta^2 d_2^2}{P_{\mathrm{e}} M_2^2}+\frac{\sigma_\mathrm I^2 \beta^2 d_0^2}{M_1 M_2}\nonumber\\
    &\quad=\lim _{M \rightarrow \infty}\frac{P_{\mathrm{B}} \sigma_0^2 \beta^2 d_2^2}{P_{\mathrm{e}} \!\left(1\!-\!\rho\right)^2\!\!M^2}+\frac{\sigma_\mathrm I^2 \beta^2 d_0^2}{\rho\left(1\!-\!\rho\right)\!M^2}+o\left(\frac{1}{M^2}\right).\label{pro1_xi}
\end{align}
As such, we have $\lim _{M \rightarrow \infty} \frac{C}{\log _2 M}=2$, which thus completes the proof.
\end{proof}

Proposition \ref{pro1} can be explained as follows.
On one hand, the signal power at the receiver increases with $M$ in the order of $\mathcal{O}(M^4)$ thanks to the multiplicative beamforming gain of double-IRS reflections; while on the other hand, the amplification noise power induced by IRS 1 increases with $M$ in the order of $\mathcal{O}(M^2)$, and that induced by IRS 2 has a scaling order of $\mathcal{O}(M)$, thus leading to the received SNR scaling order of $\mathcal{O}(M^2)$.
Note that the power budget linearly increases with $M$ so that the power amplification gain can be maintained when $M$ is increased.

\begin{proposition}\label{pro2}
\emph{Given fixed $M$ and any feasible active-IRS elements allocation and placement, the capacity of the double-active-IRS aided wireless system does not increase with $P_{\rm e}$ as $P_{\rm e}\to\infty$, i.e.,}
\begin{equation}
    \lim _{P_{\rm e} \rightarrow \infty} \frac{C}{\log _2 P_{\rm e}}=0. 
\end{equation}
\end{proposition}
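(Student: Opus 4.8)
The plan is to reuse the closed-form received SNR $\gamma_{\ell}=NP_{\rm B}\beta^3/\xi_{\ell}$ that is obtained by substituting the optimal reflections and BS beamforming \eqref{phase_w}--\eqref{optAFI2} into \eqref{snr_ori}, together with the explicit expression for $\xi_{\ell}$ in \eqref{xi} (writing $d_{2,\ell}$ as $d_2$ for a generic user, exactly as in the proof of Proposition~\ref{pro1}). Since the capacity is $C=\log_2(1+\gamma_{\ell})$ up to the immaterial constant $1/L$, it suffices to track the behavior of $\xi_{\ell}$ as $P_{\rm e}\to\infty$. The crucial structural observation is that in \eqref{xi} every term carries a strictly positive power of $P_{\rm e}$ in its denominator \emph{except} the last one, $\sigma_{\mathrm I}^2\beta^2 d_0^2/(M_1 M_2)$, which is completely free of $P_{\rm e}$.

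First I would let $P_{\rm e}\to\infty$ in \eqref{xi}: the four terms containing $P_{\rm e}$ or $P_{\rm e}^2$ in their denominators vanish, leaving
\[
\lim_{P_{\rm e}\to\infty}\xi_{\ell}=\frac{\sigma_{\mathrm I}^2\beta^2 d_0^2}{M_1 M_2},
\]
a finite positive constant independent of $P_{\rm e}$. Hence the received SNR converges to the finite constant $\gamma_{\infty}\triangleq NP_{\rm B}\beta M_1 M_2/(\sigma_{\mathrm I}^2 d_0^2)$, and the capacity converges to $\log_2(1+\gamma_{\infty})<\infty$. Dividing by $\log_2 P_{\rm e}$, which diverges as $P_{\rm e}\to\infty$, yields $\lim_{P_{\rm e}\to\infty} C/\log_2 P_{\rm e}=0$. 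Because the feasible elements allocation and placement enter only through the finite constants $M_1,M_2,d_0$ and not through whether the limit is zero, the conclusion holds for any such choice and for any user in the served zone.

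There is essentially no technical obstacle in the calculation itself; the only point worth stressing is \emph{why} raising the per-element budget ultimately fails to help, which I would add as a short explanatory remark after the limit, paralleling the discussion following Proposition~\ref{pro1}. As $P_{\rm e}\to\infty$, the optimal per-element amplification factors $a_1$ and $a_2$ in \eqref{optAFI1}--\eqref{optAFI2} grow without bound, so the desired signal power (amplified over two hops) scales like $\mathcal{O}(P_{\rm e}^2)$; but the amplification noise injected at IRS~1 and then re-amplified by IRS~2 scales at the same rate $\mathcal{O}(P_{\rm e}^2)$. This noise term is precisely the surviving, $P_{\rm e}$-independent term $\sigma_{\mathrm I}^2\beta^2 d_0^2/(M_1 M_2)$ in $\xi_{\ell}$, and it caps the SNR — hence the capacity — at a constant, giving the zero scaling order.
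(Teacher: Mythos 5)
Your proposal is correct and follows essentially the same route as the paper's own proof: substitute the optimal beamforming to get $\gamma_{\ell}=NP_{\rm B}\beta^3/\xi_{\ell}$, observe that every term of $\xi_{\ell}$ in \eqref{xi} except $\sigma_{\mathrm I}^2\beta^2 d_0^2/(M_1M_2)$ vanishes as $P_{\rm e}\to\infty$, so the SNR and hence the capacity saturate at a finite constant and $C/\log_2 P_{\rm e}\to 0$. Your added remark on the signal and re-amplified noise both scaling as $\mathcal{O}(P_{\rm e}^2)$ matches the paper's own interpretation following the proposition.
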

\begin{proof}
Given the optimal IRS reflections and BS beamforming in \eqref{phase_w}--\eqref{optAFI2}, we have
\begin{align}
    &\lim _{P_{\rm e} \rightarrow \infty}\xi =\lim _{P_{\rm e} \rightarrow \infty} \frac{\sigma_\mathrm I^4 \sigma_0^2 d_0^2 d_1^2 d_2^2+P_{\mathrm{B}} \sigma_\mathrm I^2 \sigma_0^2 \beta d_1^2 d_2^2}{P_e^2 M_1^2 M_2^2}+\frac{\sigma_0^2 \sigma_{\mathrm{I}}^2 \beta d_0^2 d_2^2}{P_{\mathrm{e}} M_1 M_2^2}\nonumber\\
    &\quad+\frac{\sigma_\mathrm I^4 \beta d_0^2 d_1^2+P_{\mathrm{B}} \sigma_{\mathrm{I}}^2 \beta^2 d_1^2}{P_{\mathrm{e}} M_1^2 M_2}+\frac{P_{\mathrm{B}} \sigma_0^2 \beta^2 d_2^2}{P_{\mathrm{e}} M_2^2}+\frac{\sigma_\mathrm I^2 \beta^2 d_0^2}{M_1 M_2}\nonumber\\
    &\quad=\lim _{P_{\rm e} \rightarrow \infty}\frac{\sigma_\mathrm I^2 \beta^2 d_0^2}{M_1 M_2}+o\left(1\right).\label{pro1_pe}\vspace{-8pt}
\end{align}
Based on \eqref{pro1_pe}, it follows that $\lim _{P_{\rm e} \rightarrow \infty} \frac{C}{\log _2 P_{\rm e}}=0$, which thus completes the proof.\vspace{-0.1cm}
\end{proof}

Proposition \ref{pro2} can be explained as follows.
First, the beamforming gain does not increase since it only depends on the number of reflecting elements, which remains unchanged when the active-IRS per-element amplification power increases.
Second, both the amplification noise power and desired signal power linearly increase with the per-element amplification power, hence resulting in no increase in the received SNR at the user.

\begin{remark}
\emph{\textbf{(Double-active-IRS versus single-active-IRS)}
Given fixed per-element amplification power, the received SNR scaling order with double active IRSs is higher than that with a single active IRS in terms of $M$, i.e., $\mathcal{O}(M^2)$ versus $\mathcal{O}(M)$, with the latter shown in \cite{9377648}. This is due to the higher beamforming gain provided by double reflections in the double-IRS-aided system compared to the single-IRS case.}
\end{remark}

\begin{remark}
\emph{\textbf{(Double-active-IRS versus double-passive-IRS)} As shown in \cite{9060923}, the received SNR of the double-passive-IRS aided system increases with $M$ in the order of $\mathcal{O}(M^4)$, which is higher than the double-active-IRS case, i.e., $\mathcal{O}(M^2)$. Note that although the received signal power increases with $M$ in the order of $\mathcal{O}(M^4)$ for both active and passive IRSs, the amplification noise power in the active-IRS case also increases in the order of $\mathcal{O}(M^2)$, hence reducing its SNR scaling order. However, it is worth mentioning that the double-active-IRS aided system is expected to achieve superior rate performance over its passive counterpart in practice, especially when the number of reflecting elements is moderate and/or the amplification power is large, due to the additional signal power amplification gain.}
\end{remark}

\vspace{-15pt}
\section{Numerical Results}
\vspace{-2pt}
Numerical results are presented in this section. The horizontal distance from the BS to the user is set as $D=200$ m and the IRSs are deployed at an altitude of $H=5$ m. {\color{black}We assume that $L=4$ users are randomly distributed within the radius of 30 m from the user-zone center.}
If not specified otherwise, we set the per-element amplification power as $P_{\rm e}=1$ mW, and the fixed elements budget as $M=128$.
{\color{black}Other parameters are set as $N=4$, $\lambda=0.4$ m, $\beta=(\lambda / 4 \pi)^2=-30 \mathrm{~dB}$, $\alpha=2$, $P_{\rm B}=1$ W, and $\sigma_0=\sigma_{\rm I}=-80$ dBm.}

\begin{figure}[t] \centering
{\vspace{-5pt}\subfigure[] {
\includegraphics[width=0.46\linewidth]{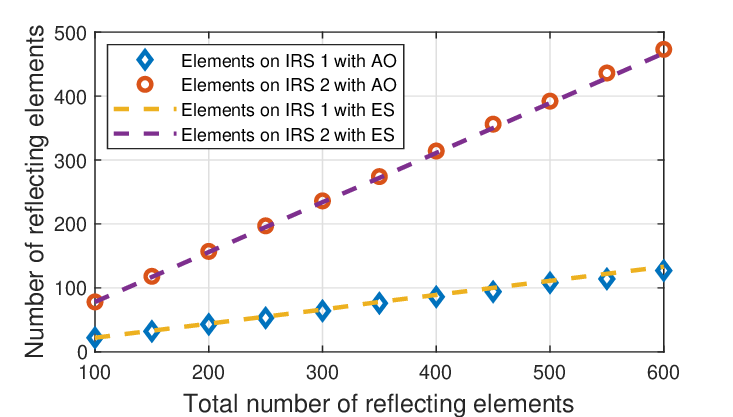}  
}}
{\vspace{-5pt}\subfigure[] {
\includegraphics[width=0.46\linewidth]{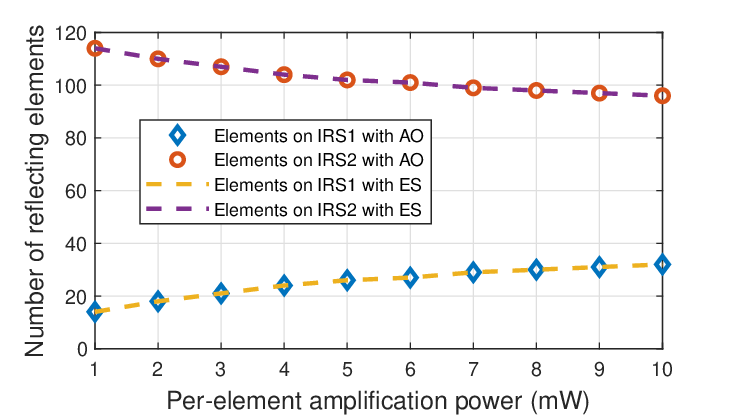}
}}
{\caption{Optimal active-IRS elements allocation.}\vspace{-13pt}\label{FigM}}
\end{figure}

\begin{figure}[t] \centering
{\vspace{-5pt}\subfigure[] {
\includegraphics[width=0.46\linewidth]{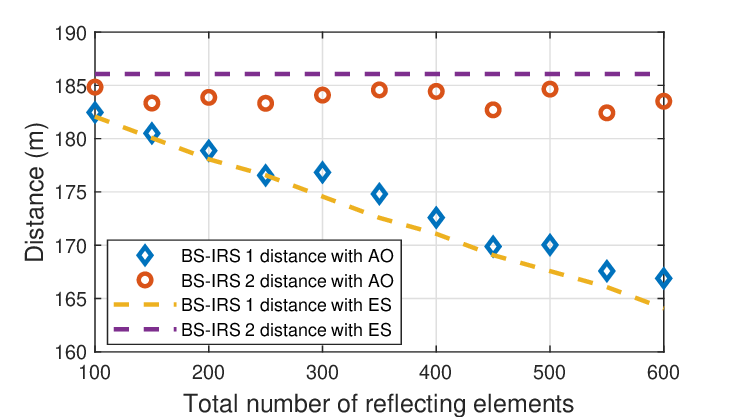}  
}}
{\vspace{-5pt}\subfigure[] {
\includegraphics[width=0.46\linewidth]{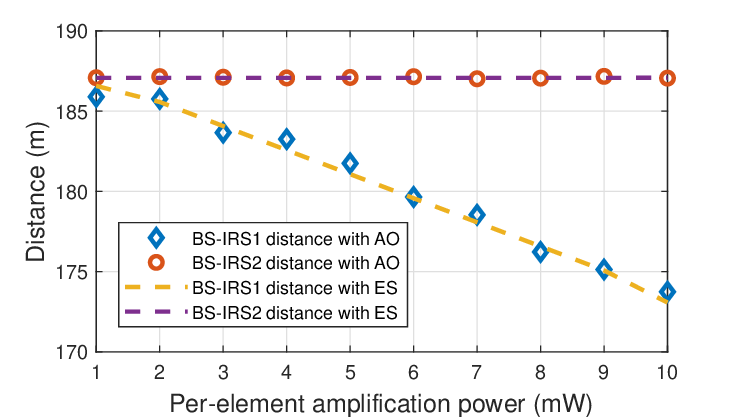}
}}
{\caption{Optimal active-IRS locations.}\vspace{-18pt}\label{FigLOC}}
\end{figure}

We first compare the optimized active-IRS elements allocation and placement by using the proposed AO algorithm and the exhaustive search (ES).
Fig. \ref{FigM} shows the optimized elements allocation for the two IRSs.
One can observe that the proposed AO algorithm yields close-to-optimal elements allocation with the ES method.
Besides, in Fig. \ref{FigM}(a), we observe that the optimized number of elements for both IRSs increases with the total number of reflecting elements, while more elements are allocated to IRS 2 (closer to user) to reduce the amplification noise induced by IRS 1.
Moreover, as observed in Fig. \ref{FigM}(b), with an increasing per-element amplification power, the number of reflecting elements allocated to IRS 1 increases for optimally balancing the trade-off between the signal and noise power amplification.
In Fig. \ref{FigLOC}, we compare the optimized active-IRS locations.
We observe that as the total elements budget or per-element amplification power increases, IRS 1 should be placed even closer to the BS for minimizing the amplification noise power with smaller amplification factors, while IRS 2 should be placed above the user to minimize the cascaded path loss.

\begin{figure}[t] \centering
{\subfigure[Rate versus total number of reflecting elements.] {
\includegraphics[width=0.46\linewidth]{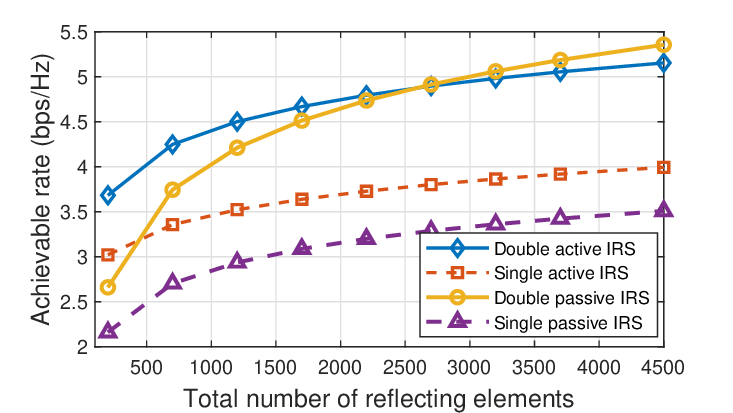}  
}}
{\subfigure[Rate versus per-element amplification power.] {
\includegraphics[width=0.46\linewidth]{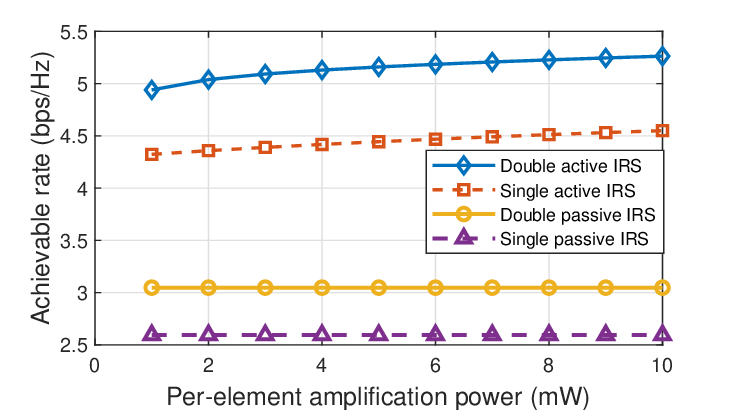}
}}
{\caption{Rate comparison of double-active-IRS with benchmark systems.}\vspace{-13pt}\label{FigC}}
\end{figure}

Next, Fig. \ref{FigC} compares the achievable rates of wireless communication system aided by double active IRSs, a single active IRS, double passive IRSs, and a single passive IRS.
Fig. \ref{FigC}(a) plots the rate performance versus the total number of reflecting elements. 
First, it is observed that the double-active-IRS aided system achieves a higher capacity scaling order than that aided by a single active IRS owing to the higher double-reflection multiplicative beamforming gain.
Second, the double-active-IRS aided system achieves a higher rate than its double-passive-IRS counterpart when the number of reflecting elements is not excessively large thanks to the additional power amplification gain.
In Fig. \ref{FigC}(b), the rate performance is compared versus the per-element amplification power. 
It is observed that the active-IRS cases outperform the passive-IRS cases thanks to the power amplification gain. 
Moreover, the achievable rates of active-IRS aided systems increase slowly with the per-element amplification power, despite the non-favorable asymptotic result in Proposition \ref{pro2}.
This is because both the powers of the desired signal and amplification noise linearly increase with the per-element amplification power.
\vspace{-8pt}
\section{Conclusions}\vspace{-4pt}
{\color{black}In this letter, we proposed an AO-based algorithm to optimize the BS beamforming, active-IRS reflections, elements allocation, and placement for maximizing the achievable rate of the double-active-IRS aided wireless communication system.}
Besides, we characterized the capacity scaling orders of the double-active-IRS aided wireless system w.r.t. the number of reflecting elements and the per-element amplification power. 
It was shown that given the fixed per-element amplification power, the received SNR increases asymptotically with the square of the number of reflecting elements; while given the fixed number of reflecting elements, it does not increase with the per-element amplification power when it is asymptotically large.
Numerical results were presented to evaluate the proposed algorithm and compare the rate performance of the double-active-IRS aided wireless system with various benchmark systems.
\vspace{-5pt}
\vspace{-0.2cm}
\bibliographystyle{IEEEtran}
\bibliography{ref.bib}
\end{document}